\newtheorem{theorem}{Theorem}[section]
\newtheorem{lemma}{Lemma}[section]
\newtheorem{proposition}{Proposition}[section]
\newcommand{\qed}{\hfill $\Box$ \bigbreak}
\newcommand{\alg}{\hfill $\diamondsuit$}
\newenvironment{proof}{\noindent {\bf Proof.}}{\qed}
\newcommand{\remove}[1]{}
\begin{document}

\baselineskip  0.2in %  0.2in %0.18in si on veut compact
\parskip     0.1in %    0.1in % 0.0in  pour compacter
\parindent   0.0in %    0.0in % 0.3in pour voir les paragraphes

\title{{\bf Use of Information, Memory and Randomization in Asynchronous Gathering }}

\author{
Andrzej Pelc\thanks{D\'{e}partement d'informatique, Universit\'{e} du Qu\'{e}bec en Outaouais,
Gatineau, Qu\'{e}bec J8X 3X7,
Canada. E-mail: pelc@uqo.ca.
}
}

\date{ }
\maketitle

\begin{abstract}
We investigate initial information, unbounded memory and randomization in gathering mobile agents on a grid. We construct a state machine, such that it is possible to gather, with probability 1, all configurations of its copies. This machine has initial input, unbounded memory, and is randomized. We show that no machine having any two of these capabilities but not the third, can be used to gather, with high probability, all configurations.

We construct deterministic Turing Machines that are used to gather all connected configurations, and we construct deterministic finite automata that are used to gather all contractible connected configurations.

\vspace{2ex}

\noindent {\bf Keywords:} gathering, grid, state machine, deterministic, randomized, memory, input, mobile agent. 
\end{abstract}

\vfill

\vfill

\thispagestyle{empty}
\setcounter{page}{0}
\pagebreak

%%%%%%%%%%%%%%%%%%%%%%%%%%%%%%%%%%%%%%%%%%%%%%%%%%%%%%%%%%%
\section{Introduction}
%%%%%%%%%%%%%%%%%%%%%%%%%%%%%%%%%%%%%%%%%%%%%%%%%%%%%%%%%%%

Initial information, unbounded memory and randomization are features known to influence the feasibility of many distributed tasks. In particular, each of these capabilities
significantly influences the feasibility of tasks performed by mobile agents in networks. For example, even synchronous anonymous agents cannot meet
in an oriented two-dimensional grid or in ring, in a deterministic way, as they are incapable of breaking symmetry. On the other hand, such meeting of synchronous randomized agents can 
be accomplished with probability 1 on these networks, which shows the power of randomization in this case. The power of unbounded memory has been also frequently observed.
For example, it was shown in \cite{CKP} that, even in the class of rings, memory of logarithmic size is needed to accomplish rendezvous of deterministic  anonymous agents
starting from non-symmetric positions. This shows the power of unbounded memory: deterministic finite automata accomplish less, in terms of rendezvous, than deterministic unbounded-memory state machines. Finally, the impact of initial information on the feasibility of tasks performed by mobile agents is also well known.
For example, it is possible to perform exploration with stop of all oriented rings by a single agent, if some bound on the size of the ring is initially given to the agent, but it is impossible to perform this task in this class without any initial input.

In this paper we investigate the power of these three capabilities
in accomplishing a well-researched task: that of gathering mobile agents. Agents are modeled as identical (anonymous) copies of the same state machine, either with bounded memory (and hence they are
finite automata), or with unbounded memory (and hence they are computationally equivalent to Turing Machines). They can be deterministic or randomized.
They can get some initial input depending on the initial configuration of the agents, or they can be deprived of any such input. 

Agents start from distinct cells, they move asynchronously in the grid, and all of them have to get to a single cell and detect this fact. They have a compass
showing  the cardinal directions, and in each round they see the states of all agents in the same and in neighboring cells. On the basis of this observation in round $r$,
of the initial input, if it is provided, and of a coin toss in the case of randomized agents, an agent transits to some state and, depending on it, either stays inert or moves to a neighboring cell in some round $r'>r$ decided by the asynchronous adversary. 
The actions of an agent are performed in an increasing sequence $(r_1,r_1',\dots, r_k,r_k')$ of rounds, where in rounds $r_i$ the agent looks at states in its neighborhood, and in rounds $r_i'$ it transits to a new state and, depending on it, stays still or moves.

%We measure the ``gathering power'' of a given type of machine by studying the class of initial configurations that can be gathered if the agents in the configurations are modeled by this machine type. 
%Hence a machine type $\cal M$ is {\em strictly stronger} than a machine type $\cal M'$, if the class of configurations gatherable when agents are of type $\cal M$ strictly includes the class of  configurations gatherable when agents are of type $\cal M'$.

%--------------------------------------------------
\subsection{Our results}
\label{subsec:ourresults}

We ask the fundamental question if there exists a state machine, such that it is possible to gather, with probability 1, all initial configurations of agents that are copies of this machine.
The answer is yes, and our first result is the construction of such a machine. This is a machine in the strongest of our models: it has initial input, unbounded memory,
and it is randomized. Hence it is natural to ask if those three capabilities are indeed necessary to gather all configurations. We answer this question affirmatively  as well,
by showing that no machine having any two of these capabilities but not the third, can be used to gather, with high probability, all possible configurations.

Finally, we consider the gathering power of deterministic machines. For this model it is natural to consider connected configurations, as deterministic machines
(even with initial input and with unbounded memory) cannot be used to gather very simple disconnected configurations.
 We construct deterministic Turing Machines (without initial input) that can be used to gather all connected configurations, and we construct deterministic finite automata 
 (without initial input) that can be used to gather all contractible connected configurations, i.e., connected configurations without ``holes''.

%--------------------------------------------------

%--------------------------------------------------
\subsection{Related work}
\label{subsec:relatwork}
%--------------------------------------------------

Gathering has been mostly studied for two mobile agents and in this case it is usually called rendezvous.
An extensive survey of  randomized rendezvous in various scenarios  can be found in
\cite{alpern02b}, cf. also  \cite{alpern95a,alpern02a,anderson90,baston98,israeli}. 
Deterministic rendezvous in networks has been surveyed in \cite{Pe}.
Several authors
considered the geometric scenario (rendezvous in an interval of the real line, see, e.g.,  \cite{baston98,baston01,gal99},
or in the plane, see, e.g., \cite{anderson98a,anderson98b}).
Gathering more than two agents has been studied, e.g., 
in \cite{GWB,israeli,lim96,thomas92}. In~\cite{YY} the authors considered 
rendezvous of many agents with unique labels, and gathering many labeled agents in the presence of Byzantine agents was studied in \cite{DPP}. 
Configurations of anonymous agents that can be synchronously gathered in arbitrary graphs were characterized in \cite{DP}.
The problem was also studied in the context of multiple robot systems, cf.
\cite{CP05,fpsw}, and fault tolerant gathering of robots in the plane was studied, e.g., in \cite{AP06,CP08}. 

For the deterministic setting, a lot of effort has been dedicated to the study of the feasibility of rendezvous, and to the time required to achieve this task, when feasible. For instance, deterministic rendezvous with agents equipped with tokens used to mark nodes was considered, e.g., in~\cite{KKSS}. Deterministic rendezvous of two agents that cannot mark nodes but have unique labels was discussed in \cite{DFKP,TSZ07}.
These papers are concerned with the time of rendezvous in arbitrary
graphs. In \cite{DFKP} the authors show a rendezvous algorithm polynomial in the size of the graph, in the length of the shorter
label and in the delay between the starting time of the agents. In \cite{TSZ07} rendezvous time is polynomial in the first two of these parameters and independent of the delay.

Memory required by two anonymous agents to achieve deterministic rendezvous has been studied in \cite{FP2} for trees and in  \cite{CKP} for general graphs.
Memory needed for randomized rendezvous in the ring is discussed, e.g., in~\cite{KKPM08}. 

Several authors have investigated asynchronous rendezvous in the plane \cite{AGM,CFPS,fpsw} and in network environments
\cite{BCGIL,CLP,DGKKP}.
In the latter scenario it was assumed that the agent chooses the edge which it decides to traverse but the adversary controls the speed of the agent. Under this assumption rendezvous
in a node cannot be guaranteed even in very simple graphs, and hence the rendezvous requirement is relaxed to permit the agents to meet inside an edge. 
In \cite{BCGIL} the authors studied rendezvous in grids of two agents sharing a common coordinate system. The feasibility of asynchronous rendezvous of two anonymous agents in arbitrary graphs was discussed in \cite{GP}, both in the deterministic and in the randomized version.

Collective exploration of the grid in models similar to ours has been considered, e.g., in \cite{ELSUW,KLUW}.

\subsection{The model}

We consider the grid ${\bf Z} ^2$ whose nodes are called {\em cells}. The distance between two cells $(x,y)$ and $(x',y')$ is $|x-x'|+|y-y'|$.
Cells $(x,y)$ and $(x',y')$ are {\em neighbors}, if their distance is 1.
An agent situated in cell $(x,y)$ can move  to one of the neighboring cells $(x,y+1),(x+1,y),(x,y-1),(x-1,y)$ or stay put. The move to each of the above cells
is denoted $N,E,S,W$, respectively, and we use $P$ to denote the stay-put action. This move or the stay-put action depends on the current memory state of the agent, 
which in turn depends on the previous state, on what the agent sees in the cell in which it is situated and in the neighboring cells, and on a coin toss in the case of random machines. In the sequel we will use the terms ``move North'', ``the cell North of a given cell'', etc. The grid is oriented and the agents have compasses, so each agent knows the meaning of $N,E,S,W$. However, agents do not see the coordinates of the cells. In the beginning, agents occupy some cells, one agent per cell. The set of these cells (and, by abuse of terminology, of these agents) is called an {\em initial configuration}.

In all our models, agents are represented as identical copies of a state machine defined as follows. Consider a set $Q$ of states. This set may be finite or (countable) infinite.
We specify an element $q_0 \in Q$ as the initial state.
A deterministic machine (called a deterministic finite automaton if $Q$ is finite, and a deterministic infinite state machine if $Q$ is infinite) is specified by two functions.
The action function $\phi:Q \longrightarrow \{N,E,S,W,P\}$ indicates how the action of the agent depends on its current state. 
In order to describe the transition function $\pi$ which indicates how agents change states, we denote by $S$ the set of finite sets of couples from ${\bf Z} ^+\times Q$, where ${\bf Z} ^+$ is the set of positive integers. The transition function is a function $\pi:Q\times S\times S \times S \times S \times S \longrightarrow Q$ and is interpreted as follows. Consider a five-tuple $(a_P,a_N,a_E,a_S,a_W)$  of elements of S. Each of its terms is a set of the type $\{(n_1,q_1),\dots, (n_k,q_k)\}$. It is interpreted as the observation of $n_i$ agents with state $q_i$, for $i=1,\dots,k$. For the set $a_P$ this observation concerns the cell where the observing agent is situated, for the  set $a_N$ it concerns the cell North of it, and so on. Now if $\pi(q,a_P,a_N,a_E,a_S,a_W)=q'$, then the agent observing the respective numbers of agents in the respective states in the five cells to which it has currently access, transits from state $q$ to state $q'$. This captures the assumption that an agent can observe states of all agents in its own and in the neighboring cells, and acts depending on the result of this observation.

A random machine (called a random finite automaton if $Q$ is finite, and a random infinite state machine if $Q$ is infinite) is defined by a suitable modification of the transition
function (the action function remaining the same). By $b$ we denote a random bit (each value with probability 1/2). The transition function is now
$\pi:\{0,1\}\times Q\times S\times S \times S \times S \times S \longrightarrow Q$, and its value is determined as before but additionally depends on the value
of the random bit, which is the first term of the 7-tuple in the domain. Both in the deterministic and in the random case, each agent starts in the initial state $q_0$,
if no initial input is given.

Notice that, in the case of an infinite set $Q$ of states, the deterministic (resp. random) infinite state machine is computationally equivalent to the
deterministic (resp. random) Turing Machine. We will use the latter terminology, thus speaking of four models of agents: 
deterministic finite automata, deterministic Turing Machines, random finite automata, and random Turing Machines.

We need one more distinction. The above defined models correspond to machines without any initial input. However, we will also use machines with initial input 
(possibly depending on the initial configuration) given to all the agents at the beginning as a finite binary string (the same for all agents). Let $\cal I$ be the set of finite binary strings. Machines (of each of the above four types) {\em with initial input}
are defined similarly as before, using additionally an input function $f: {\cal I}  \longrightarrow Q$ interpreted as follows. If the machine is given initial input $I$, it starts in state $f(I)$ instead of state~$q_0$.

The last issue to specify is the way in which asynchrony of the agents is modeled. Since we want the agents to gather in some cell, rather than inside an edge,
we cannot use the adversary from \cite{BCGIL,CLP,DGKKP} that can decide the speed of an agent in each edge. Instead, we adopt the asynchronous adversary from
 \cite{ELSUW,KLUW}. Time is partitioned in rounds and each agent is activated in an increasing  sequence $(r_1,r_1',\dots, r_k,r_k')$ of rounds, decided by the adversary.  In a  round $r_i$, the agent 
 in current state $q$ looks at states of agents in its cell and in the neighboring cells  and computes the value of the function $\phi$ on this basis and possibly on the basis of the random bit. This is called a {\em look}. In the round $r_i'$, the agent transits to the new state $q'$ computed by $\phi$ and stays put or moves (i.e.,
 it executes the function $\pi$). As in  \cite{ELSUW,KLUW} we assume for simplicity that the change of state and the move are done at the end of round $r_i'$.   This means that in the segment $[r_i,r_i']$ of rounds the agent stays inert in state $q$. The agent can be seen in the (possibly new) cell
 in state $q'$ only in round $r_i'+1$.
 
 For any of the above models, the task of gathering is formulated as follows.
\begin{itemize}
\item
There exists a round in which all agents are in a single cell and all have transited to the final state $\omega$, which is interpreted as the agents detecting  gathering.
\item
Once in the state $\omega$, an agent remains in it and stays put forever.
\item
An agent never transits to the final state $\omega$ before all agents are in a single cell (no false detection of gathering on the part of any agent). 
\end{itemize}

 \section{Three qualities are better than two}
 
 We start by asking the following fundamental question.  
 
 \begin{quotation}
 \noindent
 Does there exist a state machine, such that it is possible to gather, with probability 1, all initial configurations of agents that are copies of this machine?
\end{quotation}

The following result gives a positive answer to this question.

 \begin{theorem}\label{yes}
There exists a random Turing Machine $M$ with initial input that has the following property. Consider any initial configuration $C$ in which agents are copies of $M$, and suppose that  the size $n$ of $C$ is given to each of them as initial input. Then the configuration will be gathered with probability 1. 
 \end{theorem}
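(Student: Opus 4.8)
The plan is to use the common input $n$ (the number of agents) as the key that lets the agents coordinate, and randomization to break the symmetry that a purely deterministic machine cannot break. Since all agents know $n$, the first phase is a synchronization-and-counting protocol: each agent performs a systematic search of the grid (say, by spiralling outwards from its starting cell using its compass and its unbounded memory to remember the spiral position) until it has \emph{seen} all $n$ agents, i.e., until the union of cells it has visited contains the whole configuration. Because the configuration is finite and the adversary is only asynchronous (not adversarial about \emph{whether} an agent eventually acts — every agent is activated infinitely often in its sequence $(r_1,r_1',\dots)$), each agent will, after finitely many of its own active rounds, have a complete picture of the configuration in its own local coordinate system (its starting cell as origin). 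Crucially, since all agents share the same input $n$, an agent knows exactly when its exploration is complete: when it has counted $n$ agents.

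The second phase is \emph{leader election by random tags}. Each agent draws a long random bit string (it may keep extending it lazily using the random bit $b$ on each transition, to arbitrary length) and uses it as an identifier; it then needs to learn all other agents' tags. Here there is a subtlety: the machine only observes \emph{states}, and states are elements of a possibly infinite set $Q$, so an agent can encode its current tag-prefix and its relative displacement into its state, and a passing agent can read it. So the agents run the exploration of Phase 1 again, this time recording, for each of the $n$ agents, the pair (relative position, random tag). With probability $1$ all $n$ tags are distinct (two fixed infinite random strings differ almost surely; a union bound over the $\binom n2$ pairs still gives probability $1$), and moreover every agent, after again seeing all $n$ agents sufficiently many times, agrees on the same unordered set of (position-up-to-a-global-translation, tag) pairs. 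The agent holding the lexicographically smallest tag is the unique leader, and every agent can identify, in its own coordinates, the starting cell of the leader.

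The third phase is \emph{rallying at the leader}. Every agent now walks (using its compass and its memorized relative displacement to the leader's home cell) to the leader's starting cell. The leader itself stays put there. The only remaining issue is detecting completion: an agent arriving at the rendezvous cell must not declare $\omega$ until all $n$ agents are present. This is handled by having agents, upon arrival, enter a ``waiting'' state that advertises (in its state) how many agents it currently sees in the cell; an agent transits to $\omega$ only when it sees exactly $n$ agents in its own cell, all in the waiting-or-done state. Since the leader never moves and every other agent eventually completes its walk, eventually all $n$ are co-located; at that point they all see the count $n$ and transit to $\omega$, and by construction no agent ever sees $n$ co-located agents prematurely, so there is no false detection. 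Each phase is a randomized/unbounded-memory/input-aided routine and is terminating with probability $1$, so the whole protocol gathers with probability $1$.

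The main obstacle, and the part deserving the most care, is making Phases 1 and 2 robust against asynchrony: an agent reading another agent's state sees only a momentary snapshot, and agents move at adversarially different speeds, so one must argue that repeated re-exploration eventually stabilizes each agent's recorded map, and that the stabilized maps of different agents are consistent up to the single global translation relating their private coordinate systems. The clean way to do this is to observe that the \emph{final resting configuration of tags} (each agent at its start cell, broadcasting its full tag in its state) is static, so once every agent has stopped exploring and is sitting still advertising its tag, any later full sweep by any agent reads the true, complete map; one then only needs each agent to perform ``one more'' full sweep after it believes it is done, and to re-sweep if it ever observes a discrepancy, which happens only finitely often. Formalizing ``eventually everyone is done and then one clean sweep suffices'' is the technical heart of the argument; everything else is bookkeeping enabled by unbounded memory and the shared input $n$.
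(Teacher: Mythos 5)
There is a genuine gap, and it is in Phase 1, which is the heart of the matter. Your exploration phase is deterministic (you introduce randomness only in Phase 2, for the tags), so it is defeated by exactly the symmetry argument the paper itself uses to show that deterministic machines cannot gather: take $n=2$ with agents at cells $(0,0)$ and $(0,2)$ and let the adversary activate both agents in every round. Being identical copies of the same machine with the same input $n$, they traverse identical spirals translated by the vector $(0,2)$, so their displacement vector is invariant and they remain at distance $2$ forever; neither ever sees the other (sensing reaches only distance $1$, and there are no marks left in cells), so the stopping condition ``until it has seen all $n$ agents'' is never met and Phase 1 does not terminate. The underlying misconception is treating the configuration as static during exploration: the other agents are moving too, so covering the plane with your visited cells does not imply ever being within distance $1$ of another agent at a look. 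Guaranteeing such an approach with probability 1 against the asynchronous adversary (in the paper, even one that knows all random bits in advance) is precisely the difficult part of the theorem; the paper obtains it from the randomized tunnel-forming routes of \cite{GP}, together with a round-by-round case analysis showing that when routes cross, both agents actually notice the approach despite asynchrony, and a contest/winner--loser mechanism that reduces the number of active agents until a champion holds paths to all $n-1$ others. Your proposal contains no substitute for this ingredient, and adding ``some randomness'' to the walk is not enough by itself: one still has to argue that the adversary's scheduling cannot prevent a distance-$1$ approach, which is what the tunnel property is for.

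Secondarily, even granting that agents could somehow all learn the configuration, the consistency argument of Phases 1--2 is asserted rather than proved: the claim that ``once every agent has stopped exploring and is sitting still advertising its tag, one more clean sweep suffices'' presupposes that an agent can detect that everyone else has stopped, which is itself a global coordination problem of the same nature as the one being solved (the paper faces an analogous issue in its deterministic algorithm and resolves it with an explicit confirmation phase). Also, comparing lazily extended random tags requires a tie-breaking rule when the finite prefixes observed at comparison time coincide (the paper breaks such ties positionally, e.g.\ the agent South or West loses); this is fixable, but the Phase 1 flaw is decisive as stated.
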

 
 We present an algorithm that achieves gathering with probability 1, starting from any initial configuration, and can be executed by agents that are identical copies of some random Turing Machine with initial input.
 
 {\bf Algorithm} {\tt General Gathering} 
 
 The high-level  idea of the algorithm is the following. The initial input is the number $n$ of agents in the configuration. Each agent registers it in its memory. 
 We first design a procedure that brings agents, with probability 1, at distance at most 1, makes them notice this fact, and breaks symmetry between them.
 Call this procedure {\tt Basic Approach}. Each agent starts by executing it, all the time remembering the path to its starting cell, coded as a sequence of moves
 $N,E,S,W$, called the {\em characteristic} of the agent. The characteristic is extended by one term after every move.  Upon every approach  at distance at most 1 during the execution of {\tt Basic Approach},
 one of the agents, the {\em loser}, terminates the execution of this procedure and stays idle until further notice, and the other,  the {\em winner}, continues the procedure, remembering the path to the inert agent. 
 At all times, an agent that still continues procedure {\tt Basic Approach}, has a {\em bag} consisting of some agents. In the beginning, the bag of each agent is empty.
 When two agents get at distance at most 1 during {\tt Basic Approach}, the winner adds to its bag the bag of the losing agent and this agent itself. Hence, at all times, a bag consists of (temporarily) inert agents.  Since agents are anonymous, the way to remember them (i.e., to hold them in the bag) is by keeping for each agent in the bag a path to its waiting cell: this path is
 coded by moves $N,E,S,W$ and updated at each move. After each approach of agents at distance at most 1 during procedure  {\tt Basic Approach}, the number of agents still executing this procedure decreases.
 After the end of the execution of this procedure by the agent $a$ that always won, the bag of agent $a$ consists of $n-1$ agents. Agent $a$ notices it, and knows a path to each of the other agents. 
 Then $a$ terminates procedure {\tt Basic Approach} and  starts the procedure {\tt Guiding}. It consists in visiting all the other agents in some canonical order, and giving each of them a path to the starting cell of $a$.
 Whenever a waiting agent gets this path, it execute procedure {\tt Final}: it walks along the path given to it by $a$, and stops forever in the starting cell of $a$. After visiting all agents in its bag, agent $a$ executes procedure {\tt Final} in its turn: it gets back to its starting cell, using its own characteristic.
 At the end of procedure {\tt Final}, each agent that went back to the starting cell of $a$, transits to the final state $\omega$ upon noticing $n-1$ other agents in this cell.
 
 We now describe the details of the algorithm. We will use the following notions, introduced in  \cite{CLP}.
 (In \cite{CLP} they were defined for arbitrary graphs, we adopt them to the oriented grid).
 A {\em route} of an agent is the sequence of edges $(e_1,e_2,\dots)$, such that $e_i$ is incident to $e_{i+1}$, in the order in which the agent traverses them.
This sequence is a (not necessarily simple) path in the grid. A route can be coded as a sequence of letters $N,E,S,W$, the $i$th letter indicating which direction should be taken to traverse edge $e_i$.

%Our algorithms are based on the notion of a  {\em tunnel}, introduced in \cite{CLP}. 
Consider two routes $R_1$ and $R_2$ starting at nodes $v$ and $w$, respectively.
We say that these routes form a {\em tunnel}, if there exists a prefix $[e_1,e_2,\dots,e_n]$ of route $R_1$ 
and a prefix $[e_n, e_{n-1},\dots, e_1]$
of route $R_2$, for some edges $e_i$ in the grid, such that $e_i=\{v_i,v_{i+1}\}$, where $v_1=v$ and $v_{n+1}=w$. 
Intuitively, the route $R_1$ has a prefix $P$ ending at $w$ and the
route $R_2$ has a prefix which is the reverse of  $P$, ending at $v$.
For simplicity we will also say that prefixes $[e_1,e_2,\dots,e_n]$ and 
$[e_n, e_{n-1},\dots, e_1]$ form a tunnel. 

In \cite{GP} the authors showed a (randomized) algorithm that constructs a tunnel with probability 1,  for any starting nodes $v$ and $w$.
This was used to establish rendezvous with probability 1, of any two agents, under an asynchronous adversary different from ours. This adversary could walk an agent inside any edge chosen by the agent, with arbitrary, possibly varying speed, as long as the walk was continuous and ended up at the other endpoint of the edge. While such an adversary is at least as powerful as ours, it is impossible to guarantee meeting at a node, and hence in \cite{GP} (similarly as in \cite{CLP}) the meeting could occur inside an edge of the graph.
We do not allow such a possibility. Moreover, agents have to notice the approach in our case, while rendezvous (possibly inside an edge) in  \cite{GP} finished the
entire process. 
Hence we cannot use the rendezvous algorithm from \cite{GP},
and we need a deviation from it:  the tunnel constructed there has to be used (slightly) differently. 
This modified method of meeting between any two agents will be used as a building block in our gathering algorithm.

First notice that all routes in the grid can be ordered lexicographically, by (arbitrarily) ordering the cardinal directions $N<E<S<W$. The procedure
{\tt Basic Approach} can be described as follows. 

{\bf Procedure} {\tt Basic Approach}

The agent walks along the route produced in the algorithm from \cite{GP}. This route contains a random ingredient: in some rounds the agent tosses a coin and 
adds the obtained random bit to its partial label, on the basis of which a part of the route is constructed. Then it tosses a coin again, produces an extension of its partial label,  constructs another part of the route, and so on.

%Consider any starting nodes $v$ and $w$, and let $[e_1,e_2,\dots,e_n]$ and 
%$[e_n, e_{n-1},\dots, e_1]$, starting at $v$ and $w$ respectively, form a tunnel.
%Consider agent $a$ starting at $v$ and following route $[e_1,e_2,\dots,e_n]$, and agent $b$ starting at $w$ and following route $[e_n, e_{n-1},\dots, e_1]$.
%Let $j$ be the unique index, such that agent $a$ traverses edge $e_j$ in its $j$th move and agent $b$ traverses edge $e_j$ (in the opposite direction)
%in its $(n-j+1)$th move.  
We will show that, regardless of the actions of the adversary, if there are still at least two agents  executing procedure {\tt Basic Approach},  there exists a round in which one of them
looks at states of agents in its cell and in the neighborhood  and sees one or more other agents  executing procedure {\tt Basic Approach} in one or more of these cells. 
%This will happen in one of two situations: either both agents
%came to the same cell from different directions, or both agents traversed the same edge in the same round in opposite directions. In both cases agents must have
%different paths from their starting cells at this point. 
The agent that made this observation enters a {\em contest}. It compares its characteristic to those of these other agents. If some other agent has a lexicographically larger characteristic, the observing agent 
becomes a {\em loser}, terminates its execution of
{\tt Basic Approach},  and stays idle  until it is visited by an agent performing procedure {\tt Guiding} (see below).  The agent whose characteristic is lexicographically largest of all agents executing procedure {\tt Basic Approach} that it currently sees becomes a {\em winner}. It waits until all the agents that lost with it realize that they lost and become losers,  and
it adds to its bag (the paths of) all
agents from the bags of all the losers and the paths (of length 0 or 1) to the losers themselves.
If the observing agent sees another agent in a neighboring cell with the same characteristic (which can happen e.g., at the beginning, when agents are in their starting neighboring cells, and their characteristics are empty sequences), the agent South or West of the other agent loses the contest.
The winning agent continues executing the procedure  {\tt Basic Approach} until either it becomes a loser, or until the number of agents in its bag is $n-1$. We say that in the first case it ends procedure {\tt Basic Approach} as a loser, and in the second case it ends it as a {\em champion}. These are values of variable $role$.
\alg

\pagebreak

{\bf Procedure} {\tt Guiding}

This procedure is executed only by the agent $a$ that finished procedure  {\tt Basic Approach} as the champion in some cell $v$.
The agent has paths from $v$ to the waiting cells of $n-1$ agents in its bag. It visits them in lexicographic order of the paths, each time coming back to $v$.
At each visit, the agent $a$ gives the visited agent a path to the starting cell of $a$. The procedure is terminated after coming back after the last visit.
\alg

%\pagebreak

{\bf Procedure} {\tt Final}

If the executing agent terminated procedure  {\tt Basic Approach} as a loser then upon getting the path to the starting cell of the champion, it goes to this cell along this path. If the executing agent terminated procedure  {\tt Basic Approach} as a champion, then, after terminating procedure {\tt Guiding}, it goes to its starting cell, along its characteristic.
In both cases the agent stays in this cell forever, and transits to the final state $\omega$ upon noticing $n-1$ other agents in this cell.
\alg

Now Algorithm {\tt General Gathering}  can be succinctly formulated as follows.

\begin{center}
\fbox{
\begin{minipage}{7cm}

{\bf Algorithm} {\tt General Gathering}

\vspace*{0.5cm}

execute procedure {\tt Basic Approach}\\
{\bf if} $role=champion$ {\bf then}\\
\hspace*{1cm}execute procedure {\tt Guiding}\\
execute procedure {\tt Final}

\end{minipage}
}
\end{center}

{\bf Proof of Theorem \ref{yes}}

The theorem is proved by showing that Algorithm {\tt General Gathering}  gathers an arbitrary initial configuration with probability 1.
In the rest of the proof, we omit the phrase ``with probability 1'', but all assertions should be understood with this qualifier.
In \cite{GP} the authors showed that the routes that they constructed, and that are used in our procedure {\tt Basic Approach},
 form a tunnel,  for any starting nodes $v$ and $w$.
 
 We first show that, for any two agents executing procedure {\tt Basic Approach}, either one of them becomes a loser as a result of approach with another agent, or
 they get at distance at most 1, notice this fact, and break symmetry between them. Consider any such agents $a$ and $b$,  starting at cells $v$ and $w$,
 respectively.
 If they both continue executing the procedure, since their routes form a tunnel, at some point one of the following two events must happen:
 either the agents traverse the same edge in opposite directions in the same round, or there exists a cell at which they are both in some round,
 coming from different directions. 
 
 In the first case, suppose that the crossing occurred in some round $\rho$. Then, in some round $\rho'>\rho$ one of the agents looks at the states in its neighborhood
 and realizes the existence of the other agent. It suspends its walk and waits until the other agent proceeds with the look in some round $\rho''\geq \rho'$.
 Then the two agents compare the paths from their starting cells, which must be different because the last move was from different directions. This breaks symmetry:
 one of the agents becomes the loser in this contest, and the other the winner. 
 
 In the second case, consider the earliest round in which both agents are at the same cell $v$, coming from different directions. If they come to this cell in the same round, the argument is as before. Hence assume that agent $a$ comes to cell $v$ from cell $u$ in round $r$, and agent $b$ comes to this cell from cell $u'$ in round $r'>r$.
 Let $\ell'$ be the round in which agent $b$ makes its last look before its move in round $r'$. Consider two cases. If $\ell'<r$, then agent $b$ is at cell $u'$ in round $r$ in which agent $a$ gets to $v$. Then it also gets to $v$ in round $r'$. At the first look of $a$ after its move to $v$ in round $r$, agent $a$ is at $v$ and agent $b$ is 
 either at $u'$ or at $v$. Then $a$ notices $b$, suspends its walk and waits until $b$ makes the next look. This can be either when $b$ is at $v$ or when it is at $u$.
 In both cases $b$ notices agent $a$, and the symmetry
 between the agents is broken as before: one of them becomes the loser and the other the winner. 
 
 If $r \leq \ell' <r'$, then in round $\ell'$ when agent $b$ makes a look, $b$ is at cell $u'$ and $a$ is at cell $v$. Hence $b$ notices agent $a$, suspends its walk,
 and waits until $a$ makes the next look. This can happen either when $a$ is still at $v$ or when $a$ is at $u'$. In both cases $a$ notices agent $b$ and the symmetry
 between the agents is broken as before: one of them becomes the loser and the other the winner.
 
 Hence, if there are still at least two agents executing procedure {\tt Basic Approach}, one of them must eventually enter a contest. Since every contest produces at least
 one loser, the number of agents executing procedure {\tt Basic Approach} decreases after each of them, until only one agent remains: the champion.
 At this point all agents are inert, and the champion knows paths to each of them. The champion $a$ visits each other agent, and gives it the current characteristic of $a$.
 Each such cell gets to the starting cell of the champion and remains there. The champion also gets to this cell, following its characteristic after visiting the last loser.
 Eventually each agent realizes that there are $n-1$ other agents in this cell and enters the final state $\omega$, which concludes the task of gathering. Since agents transit to the final state only when seeing $n-1$ other agents in their cell, false detection of gathering cannot occur.
 
 Finally, we note that the task is accomplished in the event that is the intersection of less than $n$ events each of which is the approach of two agents, and thus has probability 1, in view of \cite{GP}. It follows that the intersection of these events also has probability 1, which concludes the proof.
\qed

{\bf Remark.} It should be noted that our algorithm giving the result from Theorem \ref{yes} uses our general assumption that during each {\em look} the agent
learns states of all agents situated in the same and in neighboring cells. Can this assumption be weakened by allowing the possibility of  learning only states of agents located in the same cell? The answer is no. It is easy to observe that even in the simplest scenario of two agents walking on the infinite line graph (and hence simpler than the infinite grid we are considering)  meeting of the agents with probability 1 is impossible if they cannot ``see'' beyond the currently occupied cell. This implies that our assumption that agents can sense the immediate neighborhood is necessary. It also shows the significant difference between our scenario where agents must meet in a cell and that from \cite{GP}, where they could meet inside an edge.
While establishing a tunnel was enough to produce a meeting with probability 1 in \cite{GP}, without the possibility of  any sensing, {\em it would not  be enough} to produce a meeting
 with probability 1 in a cell, when agents are incapable of sensing beyond the currently occupied cell. This is why contests in the procedure  {\tt Basic Approach} (using sensing of the neighborhood) are needed in our current scenario. \qed

 Since the machine whose existence is asserted in Theorem \ref{yes} is of the strongest of all types that we consider, i.e., it has initial input, unbounded memory and is randomized, it is natural to ask if those three capabilities are indeed necessary to gather, with high probability, all initial configurations. We answer this question affirmatively as well.
 To do so, we consider the three possible combinations of two of these capabilities, without the third, and we show initial configurations that cannot be gathered,
 with high probability,
 if agents are copies of any such weaker machine.
 
 We start our negative results  with the following simple observation. Consider any deterministic Turing Machine with initial input, and consider an initial configuration
 consisting of two agents at distance $d$ larger than 1, that are copies of this machine. Consider any initial input $I$ given to these agents, and let $q=f(I)$ be the state
 in which agents start. The adversary starts the two agents in the same round $r_0$ and then activates each of them in every round. In every round $r \geq r_0$
 the agents will be in the same state, they will not see any other agent,  they will make the same moves, and hence they will be at the same distance $d$. Thus they can never meet. This proves the following proposition showing that randomness is necessary.
 
 \begin{proposition}\label{one}
 No deterministic Turing Machine (regardless of the initial input) can be used to gather all possible initial configurations.
 \end{proposition}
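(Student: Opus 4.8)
The plan is to exhibit, for an \emph{arbitrary} deterministic Turing Machine $M$ and an \emph{arbitrary} choice of input function $f$ and input string $I$, a single initial configuration that is not gathered when the agents are copies of $M$ started with input $I$. Take the configuration consisting of exactly two agents placed in cells at distance $d>1$, and let the adversary use the fully synchronous schedule: both agents are first activated in the same round $r_0$ and then in every subsequent round. The key point to record here is that this schedule is a legitimate choice for the asynchronous adversary of our model, so the argument stays inside the model.

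The heart of the argument is an induction on the round number showing that the two agents remain forever in the same state and perform the same action. Since the model hands the same input $I$ to all agents, both start in state $q=f(I)$, giving the base case. For the inductive step, assume that at the start of some round both agents are in state $q_r$ and their distance is still $d$. Because the region an agent inspects during a \emph{look} consists of its own cell and its four neighbours, all at distance at most $1$, and because $d\ge 2$, neither agent lies in the region inspected by the other; hence each records the same abstract observation --- one agent in state $q_r$ in its own cell and the empty set for each of the four neighbouring cells. Determinism then forces $\pi$ to return the same next state $q_{r+1}$ and $\phi$ to prescribe the same move for both agents, so the common move preserves the distance $d$ and the induction continues. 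Consequently the two agents are never at distance at most $1$ and in particular never share a cell.

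It follows that the first requirement of the gathering task --- all agents eventually in a single cell and in state $\omega$ --- is never met for this configuration, so $M$ with input $I$ does not gather it. As $M$, $f$ and $I$ were arbitrary, no deterministic Turing Machine, with or without initial input, can be used to gather all initial configurations, which is exactly Proposition \ref{one}. I do not expect a genuine obstacle: the only delicate point is the observation that, under the lock-step schedule, the two agents' local views are literally identical as elements of the domain of $\pi$, so determinism leaves them no way to break symmetry --- precisely the capability that randomness supplies in Theorem \ref{yes}.
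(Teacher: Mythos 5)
Your proposal is correct and follows essentially the same argument as the paper: two agents at distance $d>1$ under the fully synchronous schedule start in the same state $f(I)$, never see each other, and by induction remain in identical states making identical moves, so the distance $d$ is preserved forever and gathering never occurs. The only difference is presentational — you spell out the induction that the paper states as a one-line observation.
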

 
 We proceed to prove that some initial input is necessary as well, for gathering with high probability. This is the subject of the following result.
  
 \begin{theorem}\label{two}
 Consider any random Turing Machine $M$ without initial input. There exists an initial configuration that will be gathered with probability at most 1/2,
 if all agents are copies of $M$. 
 \end{theorem}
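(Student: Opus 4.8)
The plan is to exploit the fact that, without initial input, an agent cannot tell how many agents are in the configuration: a single agent behaves exactly like an agent that is part of a larger configuration in which every other agent sits far away and stays inert for a long time. I would use this to force \emph{either} a failure to detect gathering on a one‑agent configuration \emph{or} a false detection (a violation of the third, ``no false gathering'', clause) on a two‑agent configuration, in each case driving the success probability down to at most $1/2$.

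\emph{Step 1 (the trivial configuration).} Let $C_1$ be the configuration consisting of one agent, placed at the origin, and let $\mathcal{A}_0$ be an adversary that activates this agent infinitely often (say, looks in the odd rounds and acts in the even rounds); fix it. For a one‑agent configuration the gathering task reduces to: the agent eventually enters $\omega$ and from then on stays in $\omega$, put (the third clause is vacuous here). Let $p$ be the probability that this happens under $\mathcal{A}_0$. If $p\le 1/2$ we are done, $C_1$ being a configuration gathered with probability at most $1/2$; so assume $p>1/2$ from now on. \emph{Step 2 (localizing in space and time).} Let $\tau$ be the (possibly infinite) round in which the lone agent of $C_1$ first enters $\omega$ under $\mathcal{A}_0$, so $\Pr[\tau<\infty]\ge p>1/2$. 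For finite $R,T$ let $E_{R,T}$ be the event that $\tau\le T$ and the agent's whole trajectory up to round $\tau$ stays within distance $R$ of the origin. These events increase with $R$ and $T$ and their union is exactly $\{\tau<\infty\}$, so $\Pr[E_{R_0,T_0}]>1/2$ for suitable finite $R_0,T_0$; fix such $R_0,T_0$.

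\emph{Step 3 (the ambiguous configuration).} Let $C_2$ consist of two agents: agent $a$ at the origin and agent $b$ at the cell $(R_0+2,0)$, which is at distance at least $2$ from every cell within distance $R_0$ of the origin. Let $\mathcal{A}_2$ be the adversary that activates $a$ with the same schedule as $\mathcal{A}_0$ and does not activate $b$ before round $T_0+1$ (afterwards $\mathcal{A}_2$ may do anything, or may keep $b$ inert forever). Couple the two executions by feeding $a$ the same random bits in both. On the event $E_{R_0,T_0}$: up to round $\tau$ agent $b$ is still inert at $(R_0+2,0)$ in state $q_0$, and that cell is never within distance $1$ of a cell occupied by $a$, so the successive local views of $a$ in $C_2$ coincide with those in $C_1$; hence $a$ enters $\omega$ at round $\tau\le T_0$, at a cell distinct from $b$'s. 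In that execution an agent (namely $a$) has transited to $\omega$ while not all agents share a cell, which violates the no‑false‑detection clause, so gathering is not achieved. Therefore the probability that $\mathcal{A}_2$ lets gathering be achieved on $C_2$ is at most $1-\Pr[E_{R_0,T_0}]<1/2$, and $C_2$ is the desired configuration.

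The step I expect to need the most care is Step 3: one must place $b$ far enough away and wake it up late enough that it stays invisible to $a$ on an event of probability more than $1/2$, and this is exactly why the \emph{joint} spatial and temporal localization of Step 2 is required — a bound on the trajectory alone does not help if the adversary is eventually obliged to activate $b$, while a bound on $\tau$ alone does not help because $a$'s random route can travel arbitrarily far. A minor point is the case distinction of Step 1, which disposes of machines that simply never bring a lone agent to $\omega$.
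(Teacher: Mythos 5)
Your argument is correct under the model as stated, and it rests on the same core mechanism as the paper's proof: localize the success probability in time, add a far-away agent that the adversary keeps inert past that time horizon, couple the coin flips, and conclude that the original agents cannot distinguish the two situations and therefore commit a false detection of gathering with probability more than $1/2$. What you do differently is the instantiation and the logical structure: you give a direct construction, anchored by a dichotomy on the one-agent configuration $C_1$ (either the lone agent fails to reach $\omega$ with probability more than $1/2$, or its likely detection is exported to a two-agent configuration as a false detection), whereas the paper argues by contradiction, using the assumed success on the two-agent configuration $\{(0,0),(0,2)\}$ to extract a time bound $t$ and then adding a third agent at $(0,t+3)$ that stays inert for the first $t$ rounds. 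Two points of comparison are worth noting. First, the paper's route keeps all counterexample configurations among those with at least two agents, so it is insensitive to whether the degenerate single-agent configuration is admitted as an instance of the gathering task; your first branch leans on $C_1$ being a legitimate initial configuration (the model as written does allow it, since an initial configuration is just a nonempty set of occupied cells and detection is still required, but the interesting content of the theorem is for $n\ge 2$, and your second branch alone delivers that only under the assumption $p>1/2$). Second, your explicit spatial localization in Step 2 is redundant: an agent moves at most one cell per round, so $\tau\le T_0$ already confines the trajectory to radius $T_0$ — this is exactly why the paper can simply place the extra agent at distance $t+3$ without any separate probabilistic argument. On the other hand, your explicit coupling of the random bits is a cleaner formalization of what the paper phrases as ``events $E$ and $E'$ are not disjoint,'' and your direct bound (gathering probability at most $1-\Pr[E_{R_0,T_0}]$) avoids needing the second configuration to be gathered with probability more than $1/2$ at all.
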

 
 \begin{proof}
 Suppose to the contrary that there exists a random Turing Machine $M$ without initial input, such that any initial configuration can be gathered with probability larger than 1/2, if all agents are copies of $M$.
 Consider the initial configuration $C$ consisting of two agents situated in cells $(0,0)$ and $(0,2)$. Consider the adversary $A$ that activates both agents simultaneously in every round until gathering.
 There exists a positive integer $t$ such that gathering of configuration $C$ under adversary $A$ will occur in time at most $t$ with probability larger than 1/2. Let $E$ be this event. Now consider the initial configuration $C'$ consisting of three
 agents situated in cells $(0,0)$,  $(0,2)$ and $(0, t+3)$. Consider the adversary $A'$ that activates the agents starting from cells $(0,0)$ and $(0,2)$ simultaneously in every round until gathering, and that does not activate the third agent in any of the first $t$ rounds. 
 %There exists a positive integer $t'$ such that gathering of configuration $C'$ will occur in time at most $t$ with probability larger than 1/2.
 Let $E'$ be the event that gathering of configuration $C'$ will eventually occur under adversary $A'$. Events $E$ and $E'$ are not disjoint because each of them has probability larger than 1/2.
 Let $e$ be an elementary event in $E \cap E' $. Under event $e$ and under adversary $A$, configuration $C$ is gathered in time at most $t$. Consider what happens with configuration $C'$ under event $e$ and under adversary $A'$. In the first $t$ rounds, agents starting in cells $(0,0)$ and $(0,2)$ behave exactly as in configuration $C$ under adversary $A$ and event $e$ because in each of these
 rounds they compute the same value of  function $\psi$: the third agent is inert in these rounds and each of the two agents is too far from it to reach it in these rounds.
 Hence agents starting in cells $(0,0)$ and $(0,2)$ in configuration $C'$ meet and enter the final state $\omega$ in time at most $t$, which is incorrect, since they have not met the third agent.
 This contradiction concludes the proof.
 \end{proof}
 
 Finally, we want to show that unbounded memory is also necessary for gathering with high probability.  Indeed, we have the following result.
 
  \begin{theorem}\label{three}
 Consider any random finite automaton $M$ with initial input. 
 Consider any initial input $I_C$ given to agents in an initial configuration $C$, where all agents are copies of $M$. 
 Then there exists a configuration
 $C$ that will be gathered with probability at most 1/2. 
 \end{theorem}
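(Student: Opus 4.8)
The plan is to argue by contradiction, in the spirit of the proof of Theorem~\ref{two}, but extracting the extra leverage from the fact that $M$ is a \emph{finite} automaton. So assume that $M$, together with the given inputs $\{I_C\}$, gathers every configuration with probability greater than $1/2$ against every adversary. Since $Q$ is finite, there are only finitely many possible values for the common starting state of the agents of a configuration, no matter how the input depends on the configuration; this is the only place where the finiteness of the memory is used.

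First I would construct an increasing chain of configurations $D_0\subsetneq D_1\subsetneq D_2\subsetneq\cdots$ lying on the $y$-axis. Let $D_0$ consist of two agents, at $(0,0)$ and $(0,2)$. Given $D_i$ (which will have $i+2$ agents), observe that by hypothesis $D_i$ is gathered with probability greater than $1/2$ under the adversary $A_i$ that activates all its agents in every round, so there is a finite deadline $t_i$ such that, under $A_i$, with probability greater than $1/2$ all agents of $D_i$ are co-located in state $\omega$ by round $t_i$. Now set $D_{i+1}=D_i\cup\{(0,R_i)\}$, where $R_i$ is chosen \emph{after} $t_i$ is known and is large enough that the new agent lies more than $2t_i+2$ cells to the North of every cell of $D_i$. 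Because $Q$ is finite, among $D_0,D_1,D_2,\dots$ two of them, say $D_i$ and $D_j$ with $i<j$, assign their agents the same starting state $q^{*}\in Q$.

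Next I would run $D_j$ against the adversary $A'$ that, for the first $t_i$ rounds, activates in every round the $i+2$ ``core'' agents occupying the cells of $D_i$ and never activates the remaining $j-i\ge 1$ agents, behaving arbitrarily afterwards. During these $t_i$ rounds the core agents start in state $q^{*}$ exactly as the agents of $D_i$, start in the same cells, are scheduled exactly as under $A_i$, and never see an extra agent: each core agent has moved at most $t_i$ cells, while every extra agent is inert and sits more than $2t_i+2$ cells North of $D_i$'s cells, so the two are always at distance at least $2$. Coupling the coin tosses of the core agents across the two executions, one shows by induction on the round that the core agents of $D_j$ under $A'$ reproduce, for rounds $1,\dots,t_i$, the states and positions of the agents of $D_i$ under $A_i$; hence with probability greater than $1/2$ the $i+2$ core agents of $D_j$ are co-located in state $\omega$ by round $t_i$. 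But $D_j$ still contains $j-i\ge 1$ agents parked far to the North, so in that event these core agents have entered $\omega$ while not all agents of $D_j$ are in one cell --- a false detection, which precludes valid gathering. Therefore $D_j$ is gathered with probability at most $1/2$ under $A'$, contradicting the hypothesis and proving the theorem.

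The step I expect to be the real obstacle is exactly the one that is trivial in Theorem~\ref{two}: there the agents received no input, so the two agents of the larger configuration automatically behaved as in the smaller one; here the larger configuration may hand its agents an input encoding something about the whole configuration, so a priori their starting state bears no relation to that of any sub-configuration. The recursive construction of the chain $D_0\subsetneq D_1\subsetneq\cdots$, combined with the pigeonhole principle on the finite set $Q$, is precisely the device that produces two configurations --- one contained in the other, with the surplus agents parked arbitrarily far away --- that share a starting state, after which the coupling argument of Theorem~\ref{two} goes through. A secondary technicality is that the deadline $t_i$ grows along the chain, so each distance $R_i$ must be chosen only after $t_i$ is fixed; performing the two choices in that order keeps every quantity finite.
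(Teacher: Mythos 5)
Your proposal is correct and follows essentially the same route as the paper's proof: an inductively built nested sequence of configurations with each new agent parked far away (distance calibrated to the gathering deadlines), a pigeonhole argument on the finite state set $Q$ to find two configurations sharing the same starting state, and then the simulation/false-detection argument on the sub-configuration under an adversary that freezes the extra agents. The only differences are cosmetic (placing the extra agents on the $y$-axis, the explicit $2t_i+2$ safety margin, and phrasing the contradiction directly via false detection rather than via intersecting the two gathering events), so there is nothing substantive to add.
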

 
 \begin{proof}
 Suppose to the contrary that there exists a random finite automaton $M$ with initial input, such that any initial configuration can be gathered with probability larger than 1/2, if all agents are copies of $M$.
 We construct by induction the following sequence of initial configurations.
 The initial configuration $C_1$ consists of two agents situated in cells $(0,0)$ and $(0,2)$.  
% Let $I_1$ be the input given to the agents when $C_1$ is the initial configuration.
% Let $q_1$ be the state in which the agents start given input $I_1$. 
% Consider the adversary $A_1$ that activates both agents simultaneously in every round until gathering.
% There exists a positive integer $t_1$ such that gathering of configuration $C_1$ will occur in time at most $t_1$ with probability larger than 1/2.
 Suppose by induction that the initial configurations $C_i$, for $i<j$, have been already constructed.
 Let $I_i$ be the initial input given to the agents when $C_i$ is the initial configuration.
 Let $q_i=f(I_i)$ be the state in which the agents start, given input $I_i$. 
 Consider the adversary $A_i$ that activates all agents in $C_i$ simultaneously in every round until gathering.
 There exists a positive integer $t_i$ such that gathering of configuration $C_i$ under adversary $A_i$ will occur in time at most $t_i$ with probability larger than 1/2.
 The configuration $C_j$ consists of all agents in $C_{j-1}$ plus  an additional agent situated at distance  at least $t_1+t_2+\dots +t_{j-1}+1$ from all agents in $C_{j-1}$.
 By definition, initial configurations $C_i$ are an increasing sequence of sets of agents ordered by inclusion.
 
 Let $k$ be the smallest integer such that there exists an integer $k'<k$ for which $q_k=q_{k'}$. Such an integer must exist because the number of states in machine $M$ is finite.
 
  Let $E$ be the event that gathering of configuration $C_{k'}$ under adversary $A_{k'}$ occurs in time at most $t_{k'}$ with probability larger than 1/2. 
Consider the following adversary $A'$ for configuration $C_k$: it activates the agents from  the sub-configuration $C_{k'}$ simultaneously in every round until gathering, and it does not activate any other agent from $C_k$  in any of the first $t_{k'}$ rounds. 
 Let $E'$ be the event that gathering of configuration $C_k$ will eventually occur under adversary $A'$. Events $E$ and $E'$ are not disjoint because each of them has probability larger than 1/2.
 Let $e$ be an elementary event in $E \cap E' $. Under event $e$ and under adversary $A_{k'}$, configuration $C_{k'}$ is gathered in time at most $t_{k'}$. 
 
 Consider what happens with configuration $C_k$ under event $e$ and under adversary $A'$. 
 In the first $t_{k'}$ rounds, agents from the sub-configuration $C_{k'}$ behave exactly as under adversary $A_{k'}$ and under event $e$ because in each of these
 rounds they compute the same value of  function $\psi$: the other agents from configuration $C_k$ are inert in these rounds and each of these agents is too far from agents of the sub-configuration $C_{k'}$ to be reached by them in these rounds.
 Hence agents from the sub-configuration $C_{k'}$ meet and enter the final state $\omega$ in time at most $t_{k'}$, which is incorrect, since they have not met the other agents of $C_k$.
 This contradiction concludes the proof.
\end{proof}
 
Proposition \ref{one} and Theorems \ref{two} and \ref{three} show that all three considered features, i.e.,  initial input, unbounded memory and randomness, must be used to construct a state machine that will permit to gather all possible initial configurations with high probability. Since Theorem \ref{yes} asserts the existence of such a machine that can be used for gathering all initial configurations with probability 1, this result is the best possible in terms of machine strength assumptions.

Finally, notice that, while our positive result holds even for the omniscient asynchronous adversary that knows all random bits in advance, our negative results hold even for the oblivious adversary that does not such advance knowledge.

 \section{Deterministic machines}
 
 In this section we consider the power of deterministic machines (both Turing Machines and finite automata) to gather the important class of connected configurations.
  These are initial configurations for which the subgraph of the grid induced by cells containing agents is connected.
 In the case of deterministic machines it is a natural class to consider, as even some of the simplest disconnected configurations (two agents at distance larger than 1) cannot be gathered deterministically (regardless of the initial input) because (as we have observed in Section 2) symmetry cannot be broken, if the adversary activates the agents simultaneously.
 
 We show that deterministic Turing Machines without initial input can be used to gather all connected configurations, and that deterministic finite automata
 without initial input (our weakest model) can be used to gather all contractible connected configurations, i.e., connected configurations without ``holes''.  The latter configurations are formally defined as follows:
 both the subgraph of ${\bf Z} ^2$ induced by cells containing agents and the subgraph of ${\bf Z} ^2$ induced by cells not containing agents, are connected.
 
 The first result of this section concerns the class of all connected configurations.
 
  \begin{theorem}\label{detTM}
  There exists a deterministic Turing Machine $M$ without initial input, such that any initial connected configuration will be gathered, if all agents are copies of $M$.
  \end{theorem}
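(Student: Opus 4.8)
The plan is to exploit three features simultaneously: the configuration is connected, the compass gives all agents a common orientation, and a Turing Machine can keep a map of size $\Theta(n)$ in its state. The algorithm has two phases. In the \emph{mapping phase} no agent ever moves; the agents instead run an asynchronous local computation on the static graph $G$ whose vertices are the occupied cells and whose edges join cells at distance $1$. During a look an agent reads the full state of every agent in its own and in the four neighboring cells, and a state may encode an arbitrary finite object, so agents can broadcast and relay information along $G$. Each agent floods through $G$ the record ``the directions of my occupied neighbors''; a receiver rebases each incoming record into its own coordinate frame (with itself at the origin), using the known offsets to its four neighbor cells, and merges it into a partial map. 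An agent declares the map complete as soon as it is \emph{closed}: every recorded occupied cell has all of its declared neighbors also recorded. Since $G$ is connected, a set of cells closed under adjacency in $G$ and containing the agent's own cell must be all of $G$; hence the map equals $G$ the first time it closes, so there is no premature termination, and under a fair adversary flooding reaches every agent, so every agent eventually obtains the map of $G$ and thereby learns $n=|V(G)|$.

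Once an agent has the map it can compute, from the map and the shared orientation alone, a canonical target cell $t$ (say the lexicographically smallest occupied cell, a choice independent of the origin and hence the same physical cell for all agents) and a canonical rooted spanning tree $T$ of $G$ (say a BFS tree rooted at $t$ with lexicographic tie-breaking); each agent also knows its own position in $G$ and in $T$. The agents then run a convergecast along $T$ (``I and my whole subtree have finished the mapping phase'') followed by a broadcast of a signal $\mathtt{GO}$ down $T$. In the \emph{gathering phase}, an agent that has received $\mathtt{GO}$ walks along a canonical path realizing the displacement from its cell to $t$ (first adjusting the $x$-coordinate, then the $y$-coordinate), ignoring every agent it meets; on completing this path it enters a waiting state, and it transits to $\omega$ as soon as it sees $n-1$ other agents in its cell, each in the waiting state or already in $\omega$. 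Since there are exactly $n$ agents, this last condition holds only when all of them are gathered in $t$, so there is no false detection; and since every agent knows $t$ and reaches it, every agent eventually enters $\omega$, giving a round in which all are in $t$ and all in $\omega$.

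The main obstacle is asynchrony at the boundary between the two phases: an agent must not leave its cell while a still-active neighbor needs to read its state, either to finish its own map or to receive $\mathtt{GO}$. The convergecast rules this out for the mapping phase, since no agent sends $\mathtt{GO}$ before every agent has finished mapping. For the propagation of $\mathtt{GO}$ I would impose the rule: after receiving $\mathtt{GO}$ and exhibiting it in its state, an agent stays put until it observes that every $T$-child has also received $\mathtt{GO}$ (or has left, which is possible only after that child received $\mathtt{GO}$), and only then begins its walk to $t$. A short induction down the levels of $T$ then shows that $\mathtt{GO}$ reaches every agent --- an agent adopts $\mathtt{GO}$ on its first look after its parent exhibits it, and its parent is still present because a parent leaves only after all its children hold $\mathtt{GO}$ --- after which every agent is free to move, so no deadlock occurs. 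The remaining points --- termination of flooding and of the convergecast under a fair adversary, and the fact that agents walking to $t$ in the gathering phase cannot interfere with one another --- are routine once this barrier is in place. One may note, finally, that a nonempty finite subset of $\mathbb{Z}^2$ has no nontrivial orientation-preserving symmetry (the orientation-preserving symmetries of the grid being exactly the translations), so configurations in this class are rigid; the algorithm does not rely on this, however, the shared orientation alone being what makes randomness and initial input unnecessary here.
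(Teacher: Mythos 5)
Your proposal is correct and follows essentially the same strategy as the paper's proof of Theorem~\ref{detTM}: a stationary mapping phase in which agents, exploiting unbounded memory and the ability to read neighbors' states, reconstruct the whole configuration without any motion; an explicit barrier guaranteeing that no agent moves before every agent is informed; and a final phase in which each agent walks independently to a canonically chosen occupied cell (the paper uses the East-most cell of the North-most row) and enters $\omega$ upon seeing $n-1$ co-located agents. The differences are only in how the two stationary sub-protocols are realized: the paper floods DFS path-sequences over the letters $N,E,S,W$ and implements the barrier (procedure {\tt Confirmation}) as a second such DFS restricted to \emph{informed} agents, whereas you flood neighbor-records with a closure test and implement the barrier as a convergecast plus a guarded {\tt GO} broadcast down a canonical spanning tree with the parent-waits-for-all-children rule. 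Your barrier has the merit of making the liveness of the hand-off under asynchrony and departures explicit (a parent is guaranteed to still be present and exhibiting {\tt GO} when a not-yet-informed child looks), which is precisely the disconnection danger that the paper's {\tt Confirmation} phase is designed to avert. One detail to state precisely: the closure test must require that the agent possesses the declared-neighbor record of \emph{every} cell appearing in its partial map, not merely that all neighbors declared so far are present; otherwise a cell known only second-hand (mentioned in someone else's record) would satisfy the test vacuously and the map could close prematurely. With that reading, connectivity gives exactly your conclusion that the first closed map equals $G$, and the rest of the argument is sound.
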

  
  We present an algorithm that achieves gathering, starting from any initial connected configuration, and can be executed by agents that are identical copies of some Turing Machine without input.
  
  {\bf Algorithm} {\tt Connected Gathering} 
 
 The high-level  idea of the algorithm is the following. First, all agents acquire a map of the configuration, executing procedure {\tt Map Construction}. This is done without moving any agent, solely by writing in the memory of the agents 
 increasingly longer sequences of letters $N,E,S,W$, and of symbols $N',E',S',W'$, corresponding to paths in an (asynchronous) DFS search of the graph representing the configuration, starting from every agent. Once the map is acquired by an agent and it becomes aware of it,
 the agent gets the tag {\em informed} and starts participating in procedure {\tt Confirmation} whose aim is to learn that all agents are {\em informed}. Procedure {\tt Confirmation} is essentially
 a repetition of procedure {\tt Map Construction}, but only {\em informed} agents participate in it.  Finally, after completing procedure {\tt Confirmation}, 
 each agent executes procedure {\tt Walk}. It registers the
 total number $n$ of agents, and identifies the East-most agent $a$ in the North-most row of agents. Then each other agent individually goes to the cell occupied by $a$. Every agent transits to the final state $\omega$ when it sees $n-1$ other agents in this cell. 
 
 We now describe the details of the algorithm. 
 %We will say that directions $N$ and $S$ and direction $E$ and $W$ are mutually {\em opposite}. Concatenation of sequences is denoted by $\cdot$.
 A {\em clean} sequence is a sequence of letters from $\{N,E,S,W\}$. Concatenation of sequences is denoted by $\cdot$. 
% A clean sequence $\alpha$  in the memory of an agent is a {\em loop}, if
% $\alpha$ is of the form $\beta\cdot\gamma$, where the sequence $\beta$ is also in the memory of this agent. 
The {\em mirror} of the sequence $\alpha=(a_1,\dots,a_k)$, where $a_i\in \{N,E,S,W\}$, is the sequence
$(a_k',a_{k-1}',\dots,a_1')$ denoted as $\alpha'$. 
 A {\em leaf} is an agent that has only one neighboring agent.

 The first procedure can be described as follows.
 
 {\bf Procedure} {\tt Map Construction} 
 
 Part 1.
 
 1. After the first look, an agent $a$ that is not a leaf and sees another agent $b$ in a neighboring cell, writes in its own memory the sequence consisting of a single letter $x$ corresponding to the direction in which $a$ is with respect to $b$. This codes a forward arrow in the DFS search.
 
 2. After the first look, an agent $a$ that is a leaf and sees another agent $b$ in a neighboring cell, writes in its own memory the sequence $(xx')$, where $x$ is the letter corresponding to the direction in which $a$ is with respect to $b$. This codes a forward and the opposite backward arrow in the DFS search, when a leaf is visited.

 3. If in some look an agent $a$ that is not a leaf sees in the memory of a neighbor agent $b$ a clean sequence $\alpha$ that was not there at the previous look of $a$, agent $a$ appends to its own memory the sequence $\alpha\cdot(x)$, where $x$ corresponds to the direction in which $a$ is with respect to $b$. This codes a forward arrow in the DFS search. 
 
 4.  If in some look an agent $a$ that is a leaf sees in the memory of a neighbor agent $b$ a clean sequence $\alpha$ that was not there at the previous look of $a$, agent $a$ appends to its own memory the sequence $\alpha\cdot(xx')$, where $x$ corresponds to the direction in which $a$ is with respect to $b$. This codes a forward arrow and the opposite backward arrow in the DFS search, when a leaf is visited.
  
  5.  If in some look an agent $a$ sees in the memory of a neighbor agent $b$ a clean sequence of the form $\alpha \cdot \beta$ that was not there at the previous look of $a$, and such that $\alpha$ is in the memory of $a$, then $a$ appends to its own memory the sequence $\alpha \cdot \beta \cdot (xx')$, where $x$ is the letter corresponding to the direction in which $a$ is with respect to $b$. This codes a forward and the opposite backward arrow in the DFS search, when a loop in the search is closed.

6.  If in some look an agent $a$ sees in the memory of a neighbor agent $b$ a sequence $\alpha$, and it sees, in the memories of all other neighbor agents, sequences of the form
 $\alpha \cdot (x) \cdot \beta \cdot \beta '$, where $x\in \{N,E,S,W\}$, and it has in its own memory the sequence  $\alpha \cdot (x)$, agent $a$ appends to its memory the extension $\alpha \cdot (x) \cdot \beta \cdot \beta '  \cdot (x')$ of each sequence $\alpha \cdot (x) \cdot \beta \cdot \beta ' $
 and erases the sequence $\alpha \cdot (x)$ from its memory. This codes a backward arrow in the DFS search. 
 
 Part 2.
 
 An agent $a$ that in some look sees only sequences of the form $\alpha \cdot \alpha'$ in the memories of all
 neighbor agents, gets the tag {\em informed} and constructs the map of the configuration as follows. For every sequence $\alpha \cdot \alpha'$  in the memory of every neighbor of $a$, and for every prefix $\beta$ of $\alpha$ in every such sequence, agent $a$ puts (in its memory) an agent in the cell corresponding to the path $\beta$ from its own cell. Also, agent $a$ records the total number $n$ of agents.
 \alg
 
 The second procedure, whose aim is learning that all agents are {\em informed}, is a repetition of the previous one among agents with tag {\em informed}, and can be described as follows.
 
 {\bf Procedure} {\tt  Confirmation} 
 
 The first part of the procedure is a copy of Part 1 of procedure {\tt Map Construction}, with the following modification: each word ``agent'' is replaced by ``agent with tag {\em informed}''.
 All sequences inscribed in the memories of agents during procedure {\tt  Confirmation} are written in a different color (say red) than in procedure {\tt Map Construction}.
 
 The second part of the procedure is as follows.
 An {\em informed} agent $a$ that in some look sees the tag {\em informed} in all neighboring agents, and all red sequences that it sees in their memories are of the form $\alpha \cdot \alpha'$,
 gets a tag {\em ready}. At this point the agent knows that all other agents know the map of the configuration.
 \alg
 
 The aim of the final procedure is getting to the same cell. It is executed by each agent with tag {\em ready}.

 {\bf Procedure} {\tt  Walk} 
 
 Let $(x,y)$ be the cell of the agent, let $y+y'$ be the North-most row of the configuration, and let $(x+x', y+y')$, for some integers $x',y'$, be the East-most cell occupied by an agent in this row.
 (Note that $y'$ is necessarily non-negative, but $x'$ may be negative.) The agent makes $y'$ steps North. Then it makes  $x'$ steps East, if $x'$ is non-negative, and $x'$ steps West, if 
 $x'$ is negative. At this point the agent stops forever. The agent transits to the final state $\omega$ when it sees $n-1$ other agents in this cell.
 \alg
 
 Now Algorithm {\tt Connected Gathering}  can be succinctly formulated as follows.

\begin{center}
\fbox{
\begin{minipage}{7cm}

{\bf Algorithm} {\tt Connected Gathering}

\vspace*{0.5cm}

execute procedure {\tt Map Construction} \\
execute procedure {\tt  Confirmation} \\
execute procedure {\tt Walk}

\end{minipage}
}
\end{center}

{\bf Proof of Theorem \ref{detTM}}

The theorem is proved by showing that Algorithm {\tt Connected Gathering} (which is deterministic and does not use any initial input) correctly accomplishes gathering,
and all agents detect this fact.

Procedure {\tt Map Construction} is an implementation of an (asynchronous) Depth First Search of the graph representing the connected configuration,
starting from every cell of this graph. Clean sequences code forward paths in DFS, and sequences with primed letters code backward paths, and are needed to signal when a given DFS call is completed (item 6 of Part 1 of procedure  {\tt Map Construction}). In particular, any agent $a$ all of whose neighbors completed
the search initiated by $a$ learns it and becomes {\em informed} (Part 2 of procedure  {\tt Map Construction}). Keeping all forward paths followed by corresponding backward paths in the memory of the agents serves to reconstruct the configuration, which is done in Part 2 of procedure  {\tt Map Construction}. Hence, at the end of procedure  {\tt Map Construction}, an agent has a correct map of the entire configuration and gets the tag {\em informed}. 

At this point, every agent can unambiguously identify the East-most cell of the North-most row of the configuration. However, trying to get there immediately, before making sure that all other agents are  {\em informed}, would be dangerous, as it would risk to disconnect the configuration, preventing other (slower) agents to see the
East-most cell of the North-most row of the (initial) configuration. Preventing this is the role of procedure  {\tt  Confirmation}. Since this procedure is a DFS executed only by  {\em informed} agents, when it is completed, every agent can safely get to the designated cell, knowing that all other agents are {\em informed}, and hence will get independently to the same cell.
The moves of each agent are done only during procedure {\tt Walk}, and end when the agent gets to this cell.  Since agents transit to the final state only when seeing $n-1$ other agents in their cell, false detection of gathering cannot occur.
\qed

 Our final result concerns the class of contractible connected configurations. Recall that these are configurations for which both the subgraph of ${\bf Z} ^2$ induced by cells containing agents and the subgraph of ${\bf Z} ^2$ induced by cells not containing agents, are connected (the latter subgraph is infinite). Notice that these are exactly connected configurations without holes, where a {\em hole} is defined as a finite connected component of the subgraph of ${\bf Z} ^2$ induced by cells not containing agents.

  \begin{theorem}\label{fa}
  There exists a deterministic finite automaton $M$ without initial input,  such that any initial connected contractible  configuration will be gathered, if all agents are copies of $M$.
  \end{theorem}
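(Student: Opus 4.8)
The plan is to have the agents collectively run a local \emph{erosion} process that monotonically shrinks the set $P$ of occupied cells, all the while preserving the two invariants ``$P$ is connected'' and ``$P$ is contractible'', until $P$ consists of a single cell; at that point gathering is complete and every agent detects it locally. A crucial design choice is that the algorithm \emph{never creates a new occupied cell}: an agent only ever moves onto a cell that is already occupied. Hence $P$ only loses cells, so the complement of $P$ only gains cells and therefore stays connected --- no hole is ever created --- and since the initial configuration is contractible, $P$ stays contractible throughout. The stopping rule is then: an agent transits to $\omega$ exactly when it sees all four of its neighbouring cells empty. This is locally checkable, and it is sound: by the connectivity invariant, a cell of $P$ can have an empty neighbourhood only when $|P|=1$, so there is no false detection (if initially $n=1$ the lone agent detects this at once; if $|P|\ge 2$, connectivity forbids an empty neighbourhood before the very end), and no agent ever needs to know $n$.

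The core of the construction is a bounded catalogue of \emph{safe-move} patterns that an agent recognizes from the states it observes in its own cell and in its four neighbours. The natural patterns are: a \emph{leaf} cell, with exactly one occupied neighbour, whose agents step onto that neighbour; and a \emph{convex corner}, with exactly two occupied neighbours in perpendicular directions (say the South and the West ones) whose diagonal South-West cell is also occupied, whose agents step onto the South neighbour. Leaf recognition is immediately local; for the corner, the occupancy of the diagonal cell is exposed through a constant amount of extra state, each agent carrying a flag that records, from its last look, whether its own South (resp.\ West) neighbour is occupied, so that the corner agent can certify safety before committing to the move. Either move deletes one cell and adds none; a leaf and a certified convex corner are non-cut vertices, so connectivity is preserved, and, being a deletion, contractibility is preserved. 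As the potential I would take the pair (height of the bounding box of $P$, width of the bounding box), ordered lexicographically, or simply $|P|$: erosion moves never enlarge the bounding box and, by repeatedly consuming the top row (and, once $P$ is a single row, the rightmost column), eventually shrink it, driving the process to $|P|=1$.

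The key combinatorial lemma to establish is that \emph{every} connected contractible polyomino with at least two cells offers such a move: either it has a leaf, or its extremal cell --- the East-most cell of the North-most row --- is a convex corner whose South-West neighbour lies in $P$. Contractibility is used here in an essential way: the extremal cell always has no North and no East neighbour, so in the leafless case it is forced to be a convex corner, and if its diagonal cell were empty it would be a cut vertex whose two ``arms'' (having no leaves) would have to loop back, which in a finite polyomino creates a hole --- contradicting contractibility. I expect this lemma, together with its asynchronous refinement, to be the main obstacle. Concretely, one must show that \emph{any set} of moves fired concurrently by the adversary is still jointly safe --- the union of the deleted cells is never a cut set of $P$, which follows because a certified convex corner's detour (its $S$-neighbour, its diagonal cell, its $W$-neighbour) and a leaf's unique neighbour are provably disjoint from the cells that any other matching move deletes --- and one must handle \emph{stale looks} (an agent that decided, at its look, to step onto a cell that is vacated before it moves, which still only amounts to deleting its own cell) and add a tie-break (for instance on the South-or-West cell) so that the last two agents do not perpetually swap cells. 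Once the lemma and these asynchrony arguments are in place, a routine potential-decrease argument shows that $P$ reaches a single cell in finite time and the stopping rule of the first paragraph completes the proof.
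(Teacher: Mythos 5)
Your overall plan coincides with the paper's: erosion moves that only ever step onto occupied cells (so the occupied set shrinks monotonically, connectivity and contractibility are preserved, and an empty neighbourhood is a sound local stopping rule), with two move types --- leaf moves and ``convex corner with occupied diagonal'' moves (the paper's NE/NW-corners lying in a 4-cycle). However, your key combinatorial lemma is false as stated. You claim that in a leafless connected contractible configuration with at least two cells, the East-most cell of the North-most row is itself a convex corner whose South-West diagonal is occupied, and you argue that otherwise that cell would be a cut vertex whose two arms ``loop back'' and enclose a hole. The arms need not loop back to each other: each arm can close a cycle of its own. Concretely, take the occupied cells $(-1,2),(0,2),(1,2),(2,2)$, $(-1,1),(0,1)$, $(2,1),(3,1)$, $(2,0),(3,0)$ (coordinates $(\mbox{column},\mbox{row})$). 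This configuration is connected, every cell has at least two occupied neighbours (no leaves), and the complement is connected (the empty cell $(1,1)$ escapes South via $(1,0)$), so it is contractible. Yet the East-most cell $(2,2)$ of the North-most row has an empty South-West diagonal $(1,1)$: it is a corner but not in a 4-cycle. Your argument breaks precisely because $(2,2)$ is an articulation cell and each of its two arms terminates in its own $2\times 2$ block, creating no hole. What is true --- and what the paper proves as its Lemma 3.1 --- is the weaker statement that such a configuration must contain a leaf or a NE-/NW-corner in a 4-cycle \emph{somewhere} (in the example, at $(-1,2)$ and $(3,1)$), and establishing this requires a separate analysis of configurations with articulation cells (the paper passes to a smallest articulation component and does a case analysis on how it attaches), not just the 2-connected case your hole argument covers.

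Two further points are under-specified in your write-up and are handled explicitly in the paper. First, restricting the corner moves to the two orientations that move \emph{South} (NE- and NW-corners) is not a mere convention: allowing all four orientations can make a $2\times 2$ block rotate or swap forever, so your ``say the South and the West ones'' needs to be fixed into an asymmetric rule before the potential argument applies. Second, the anti-swap mechanism for the two-cell endgame and the certification of the diagonal cell cannot rely on a flag remembered from an agent's previous look, since under asynchrony that information is stale; the paper uses explicit asking/agree handshakes (and still must argue that at most one harmless swap can occur). These are repairable, but the false localization of the progress lemma at the global extremal cell is a genuine gap in the proposed proof.
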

  
  We present an algorithm that achieves gathering, starting from any initial connected contractible configuration, and can be executed by agents that are identical copies of some deterministic finite automaton without initial input. We will need the following terminology. A {\em N-leaf} (resp. {\em E-leaf}, {\em S-leaf}, {\em W-leaf}) is an agent all of whose neighbor agents are located in the cell South of it (resp. West of it, North of it, East of it). A {\em NE-corner} (resp.  {\em NW-corner})
  is an agent that has neighbor agents in exactly two cells, South of it and West of it (resp. South of it and East of it).
%  An {\em empty NE-corner} (resp. {\em empty SE-corner}, {\em empty SW-corner}, {\em empty NW-corner})  is a NE-corner (resp. SE-corner, SW-corner, NW-corner) such that there is no agent in the cell at distance 2 South-West of it (resp. at distance 2 North-West of it, at distance 2 North-East of it, at distance 2 South-East of it). 
A {\em 4-cycle} is a set of agents forming a 4-cycle in the graph representing the configuration. An {\em articulation cell} in a connected configuration is a cell whose removal disconnects the configuration.
  
  {\bf Algorithm} {\tt Connected Contractible Gathering}
  
  The high-level idea of the algorithm is the following. Agents use only local observations to decide on their moves (as opposed to unbounded sequences of such previous observations, as in the case of
   Algorithm {\tt Connected Gathering}) : this is why the algorithm can be executed by copies of a finite automaton. Moreover, the algorithm has the following {\em shrinking property}: no agent ever moves to a cell that is not occupied by some agent in the current configuration. Thus the set of occupied cells is monotonically decreasing as the algorithm proceeds. Also, at all times, the current configuration is connected contractible. The algorithm proceeds using the following two types of moves: {\em leaf destruction} in which a leaf goes to its only neighboring occupied cell (with the aim of eventually emptying the cell containing this leaf), and {\em 4-cycle destruction}
  in which an agent that is a NE-corner or a NW-corner and is in a 4-cycle, goes to the cell South of it (with the aim of eventually destroying this 4-cycle, after all agents of this corner finally get South). It will be proved that, starting from any connected contractible configuration, any sequence of these two types of moves eventually results in all agents gathered in a single cell, at which point they all detect gathering and transit to the final state.
  
  We now describe the details of the algorithm. The first procedure removes leaves from the configuration. This is implemented by moving all leaf agents to the neighbor occupied cell. However, care should be taken in the special case of configurations where only two (neighbor) cells are occupied. Then agents in both these cells are  leaves, and careless moving to the neighbor cell could result in an infinite series of swaps. This is why, in this special case, symmetry between the two cells is broken, and agents of one of them stay still, while only the other agents move.

  {\bf Procedure} {\tt Leaf Destruction}
  
  If in some look an agent sees that it is a N-leaf or a E-leaf, it transits to the state {\em move South}, resp. {\em move West} and moves South (resp. moves West).
   If in some look an agent sees that it is a S-leaf or a W-leaf, it transits to the state {\em leaf asking}. An agent that sees a neighbor agent South or West of it in the state 
 {\em leaf asking} and is not a leaf itself, transits to the state {\em leaf agree}. An agent that is in state  {\em leaf asking} and sees all neighbor agents in state {\em leaf agree},  transits to the state {\em move North}, if it is an S-leaf, and to the state  {\em move East}, if it is a W-leaf, and moves North, resp. moves East.
 \alg
   
{\bf Remark.} Notice that, while the state {\em leaf agree} is introduced in the above procedure to prevent infinite swapping between two neighboring leaves, one such swap can still occur
(which is harmless). Due to asynchrony, the following situation could happen. There are three agents: agent $a$ in cell $(x,y)$, agent $b$ in cell
$(x,y+1)$, and agent $c$ in cell $(x,y+2)$. Agent $a$ that is a S-leaf, transits to the state {\em leaf asking}. Agent $b$ that is currently not a leaf, transits to the state
{\em leaf agree}. Agent $a$ makes a look, sees agent $b$ in state {\em leaf agree}, and computes its next state {\em move North}.
Then agent $c$ that is a N-leaf transits to the state {\em move South} and moves South. Now both $b$ and $c$ are N-leaves. They may look, realize it, transit to the state {\em move South} and move South,  while agent $a$ transits to the (previously computed) state {\em move North} and moves North, which results in a swap.
However, this cannot occur the second time because now there are only two occupied neighboring cells, and agent $a$ which is now a N-leaf will never transit to
state {\em leaf agree}. 

The next procedure removes NE-corners and NW-corners from 4-cycles in the configuration, by moving the respective agents South. Intuitively, an agent that is a NE-corner or a NW-corner, asks the agent
located South of it, if they are in a 4-cycle. After a positive answer, the agent moves South.

{\bf Procedure} {\tt 4-cycle Destruction}

 If in some look an agent sees that it is a NE-corner, then it transits to state {\em NE-question}. An agent that sees an agent North of it in state {\em NE-question} and sees an agent West of it,
 transits to state {\em NE-agree}. An agent that is in state {\em NE-question} and sees an agent South of it in state {\em NE-agree}, transits to state {\em move South} and moves South.
 
  If in some look an agent sees that it is a NW-corner, then it transits to state {\em NW-question}. An agent that sees an agent North of it in state {\em NW-question} and sees an agent East of it,
 transits to state {\em NW-agree}. An agent that is in state {\em NW-question} and sees an agent South of it in state {\em NW-agree}, transits to state {\em move South} and moves South.
 \alg
 
 Now Algorithm {\tt Connected Contractible Gathering}  can be succinctly formulated as follows.

\begin{center}
\fbox{
\begin{minipage}{12cm}

{\bf Algorithm} {\tt Connected Contractible Gathering}

\vspace*{0.5cm}

After each look {\bf do}

\hspace*{1cm}execute procedure {\tt Leaf Destruction} \\
\hspace*{1cm}execute procedure {\tt  4-cycle Destruction} \\
\hspace*{1cm}{\bf if} no agents are in the neighborhood {\bf then} transit to state $\omega$.

\end{minipage}
}
\end{center}

Te proof of correctness of Algorithm {\tt Connected Contractible Gathering} uses the following crucial geometric lemma.

\begin{lemma}\label{geo}
Every connected contractible configuration that has neither leaves nor 4-cycles containing NE-corners or NW-corners, must have all agents located in a single cell.
\end{lemma}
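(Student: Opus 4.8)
\quad I would argue by contradiction. Suppose $S$ is connected and contractible, has no leaf, contains no $4$-cycle possessing an NE-corner or an NW-corner, and yet $|S|\ge 2$. Since $S$ is connected with at least two occupied cells and has no leaf, every occupied cell has at least two occupied neighbours (degree $0$ is ruled out by connectivity, degree $1$ by leaf-freeness). The goal is to exhibit a \emph{hole}, i.e.\ a finite connected component of the set of unoccupied cells, which is impossible in a contractible configuration; note that contractibility is only used here (the boundary of a square shows it is genuinely needed: that configuration is leaf-free and has no forbidden $4$-cycle but has more than one cell).

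\emph{The top row and a forced notch.} Let $Y$ be the largest ordinate of an occupied cell and split the occupied cells of row $Y$ into maximal horizontal runs. A run of length $1$ is impossible: the only candidate occupied neighbour of its cell is the one immediately South, so that cell would have degree $\le 1$. A run of length $2$, say $(m,Y),(m{+}1,Y)$, is impossible too: for either cell the only candidate occupied neighbours are its run-partner and the cell immediately South, so the degree-$2$ property makes $(m,Y{-}1)$ and $(m{+}1,Y{-}1)$ occupied as well; but then $\{(m,Y),(m{+}1,Y),(m,Y{-}1),(m{+}1,Y{-}1)\}$ is a $4$-cycle whose top-right cell $(m{+}1,Y)$ has its North neighbour (above row $Y$) and its East neighbour (outside the run) both unoccupied, i.e.\ it is an NE-corner, contradicting the hypothesis. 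Hence every maximal run of row $Y$ has length at least $3$. Now let $v=(x_0,Y)$ be the rightmost occupied cell of row $Y$, lying in a run $R=[x_L,x_0]\times\{Y\}$ with $x_0\ge x_L+2$. Its North and East neighbours are unoccupied, so the two occupied neighbours forced on $v$ are $w=(x_0{-}1,Y)$ and $s=(x_0,Y{-}1)$; thus $v$ is an NE-corner, and since no $4$-cycle may contain an NE-corner, the cell $d:=(x_0{-}1,Y{-}1)$ is unoccupied. Symmetrically, at the left end $u=(x_L,Y)$ of $R$ (which is an NW-corner) the cell $(x_L,Y{-}1)$ is occupied while $(x_L{+}1,Y{-}1)$ is unoccupied. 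So directly below $R$, row $Y{-}1$ reads: occupied at $x_L$, unoccupied at $x_L{+}1$ and at $x_0{-}1$ (these coincide when $|R|=3$), occupied at $x_0$.

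\emph{From the notch to a hole.} Consider the occupied ``left wall'' descending column $x_L$ from $(x_L,Y{-}1)$, the occupied ``right wall'' descending column $x_0$ from $s$, and the region $U$ of unoccupied cells reachable from $d$ without crossing $R$ or these walls. I would show, descending one row at a time, that $U$ is finite and that every cell adjacent to $U$ is occupied, so that $U$ is a finite component of the complement --- a hole --- contradicting contractibility. The mechanism is that whenever a wall would end, or bend so as to leave $U$ open to the outside, the occupied cell at that turning point is, by construction of $U$, a local extremum of the occupied set toward the North and toward that side; together with the two occupied neighbours forced on it by the degree-$2$ property and the diagonal cell between them (again forced occupied), it is then the top-right cell of a $4$-cycle that is an NE-corner on the right, or the top-left cell of a $4$-cycle that is an NW-corner on the left --- both forbidden. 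Since $U$ cannot escape, the unoccupied cell $(x_L{+}1,Y{-}1)\in U$ lies in a finite component of the complement, the desired hole; hence $|S|=1$.

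\emph{Main obstacle.} The delicate part is the last step: one must verify \emph{precisely} that the occupied boundary of $U$ cannot bend outward, and that the two walls cannot fail to meet (leaving $U$ draining downward to infinity), without somewhere exhibiting an NE-corner or an NW-corner inside a $4$-cycle. This requires a careful case analysis of the shapes the walls can take as one goes down, and is most safely organized as an induction on $|S|$ (or on the number of occupied rows) in which the run $R$ is peeled off and connectivity and contractibility of the remaining configuration are re-established before invoking the inductive hypothesis. The top-row run lengths and the forced notch are elementary and robust; turning the notch into a hole is the heart of the argument.
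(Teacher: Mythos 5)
Your opening analysis is sound and coincides with the first step of the paper's own argument: the East-most agent of the North-most row is an NE-corner, its South and West neighbours are forced to be occupied, and the diagonal cell $d$ must be empty since no 4-cycle may contain an NE-corner (your run-length-at-least-3 observation is a nice but inessential refinement). However, the heart of the proof --- showing that $d$ lies in a \emph{finite} component of the unoccupied cells --- is precisely the part you leave unproved, and the mechanism you sketch for it is not sound as stated. A wall cell at which the wall ``ends'' or ``bends'' has its North neighbour occupied (the previous wall cell), so it can never be an NE-corner or an NW-corner (these have exactly two occupied neighbours, South/West resp.\ South/East), and no forbidden 4-cycle need appear at such a point: in your own square-frame example the two descending walls bend inward at the bottom corners with no forbidden pattern anywhere --- the enclosure of the interior there is a global fact, not the consequence of a local contradiction at the turning points. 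If instead a wall cell has nothing below and nothing on the $U$-side, the degree condition only forces its outward neighbour to be occupied, which again produces no NE- or NW-corner in a 4-cycle, so $U$ is not locally prevented from escaping. The fallback you suggest (induction on $|S|$, peeling off the top run $R$) is also not well-founded: after removing $R$ the remaining configuration may acquire leaves and forbidden corners, so re-establishing only connectivity and contractibility does not allow you to invoke the inductive hypothesis.

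For comparison, the paper closes exactly this gap by a global separation argument rather than a local one: assuming first that there is no articulation cell, it takes the West-most cell of the top row and constructs two paths inside the configuration, one to the South neighbour and one to the West neighbour of the NE-corner, both avoiding the corner itself; the union of these paths separates the diagonal cell $d$ from the infinite unoccupied component, contradicting contractibility. Configurations that do have articulation cells are then handled by a separate reduction (a smallest articulation component, with a case analysis on how it attaches to an articulation cell). Some argument of this global type --- or a careful traversal of the outer boundary --- is what your ``notch to hole'' step still requires; as written, the proposal has a genuine gap at its central step.
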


\begin{proof}
Suppose that there exists a connected contractible configuration $C$ that has neither leaves nor 4-cycles containing NE-corners or NW-corners, but has agents in at least two cells. 

Part 1.

We first prove the lemma under the additional assumption that the configuration does not have any articulation cell.
Let $x$ be the North-most row of the configuration, and let $(x,y)$ and $(x,y')$ be, respectively, the East-most and the West-most cells in this row.
We may assume that $y'<y$ because otherwise the cell $(x,y)$ would be a leaf, or all agents would be located in a single cell. By definition, there are no agents in cells $(x+1,y)$ and $(x,y+1)$,  hence this cell is a NE-corner.
There must be agents in cells $(x,y-1)$ and $(x-1,y)$ because otherwise the cell $(x,y)$ would be either a leaf, or this cell of the configuration would be isolated.
The is no agent in cell $(x-1,y-1)$ because otherwise there would be a 4-cycle containing a NE-corner.

There must exist in the configuration either a path from cell $(x,y')$ to cell $(x,y-1)$, or a path from cell $(x,y')$ to cell $(x-1,y)$ because otherwise the set of three cells
$(x,y)$, $(x+1,y)$ and $(x,y+1)$ would form a connected component, contradicting the fact that the configuration is connected. If all the paths from cell $(x,y')$ to cell $(x,y-1)$ contained cell $(x,y)$,
then cell $(x-1,y)$ would be an articulation cell, contrary to our assumption. Similarly, if all the paths from cell $(x,y')$ to cell $(x-1,y)$ contained cell $(x,y)$,
then cell $(x,y-1)$ would be an articulation cell. Hence, there must exist a path $\pi_1$ from cell $(x,y')$ to cell $(x,y-1)$ not containing cell $(x,y)$, and a path
$\pi_2$ from cell $(x,y')$ to cell $(x-1,y)$ not containing cell $(x,y)$. The union of these two paths disconnects the subgraph of ${\bf Z} ^2$ induced by cells not containing agents:
the cell $(x-1,y-1)$ is not in the infinite connected component of this subgraph. This contradiction completes the proof of the lemma under the additional assumption that the  configuration does not have any articulation cell.

Part 2.

Next suppose that the configuration $C$ contains articulation cells. Consider the smallest articulation component  $C'$ of the configuration $C$, i.e., a connected component of the configuration resulting
from removal of all articulation cells. Configuration $C'$ does not contain any articulation cells. Consider an articulation cell $v$ of $C$ and a cell $w$ of $C'$ adjacent to $v$.
We may suppose that $w$ is in the North-most row of $C'$, otherwise the argument is as in Part 1, with the cell $(x,y')$ replaced by $v$. 
We consider 4 cases: when $w$ is North, East, South or West of $v$. 

Case 1. $w$ is North of $v$.

The cell $w$ cannot be a single cell of $C'$ in this row because then $w$ would be a leaf of $C$. 
Hence either there is a cell East of $w$ or West of $w$ in $C'$. Consider the first case. Suppose that $w'\neq w$ is the East-most cell in this row.  If $w'$ is a neighbor of $w$ then it must have a South neighbor, otherwise it would be a leaf.
Hence $w'$ is a NE-corner in a 4-cycle, which is a contradiction. Hence $w'$ is not a neighbor of $w$, and the argument is as in Part 1, with the cell $(x,y')$ replaced by $v$. In the second case, when there is a cell West of $w$ in $C'$,
the argument is similar, with the NE-corner replaced by the NW-corner.

Case 2. $w$ is East of $v$.

 If $w$ is the East-most cell in this row, then  it must have a South neighbor, otherwise it would be a leaf.
Hence $v$ cannot have a South neighbor because then $w$ would be a NE-corner in a 4-cycle. However this contradicts the minimality of $C'$, as $w$ is an articulation cell. If $w$ is not the East-most cell in this row, then we consider the East-most cell in this row and the argument is as in Part 1, with the cell $(x,y')$ replaced by $v$.

Case 3. $w$ is South of $v$.

The cell $w$ cannot be a single node of $C'$ in this row because then $w$ would be a leaf of $C$. 
Hence either there is a cell East of $w$ or West of $w$ in $C'$. Consider the first case. Suppose that $w'\neq w$ is the East-most cell in this row.  If $w'$ is a neighbor of $w$ then it must have a South neighbor, otherwise it would be a leaf. Hence $w$ cannot have a South neighbor because then $w'$ would be a NE-corner in a 4-cycle. The rest of the argument is as in Part 1, with the cell $(x,y')$ replaced by $v$.
In the second case, when there is a cell West of $w$ in $C'$,
the argument is similar, with the NE-corner replaced by the NW-corner.

Case 4. $w$ is West of $v$.

The argument is analogous to Case 2.
\end{proof}

We are now ready to prove Theorem \ref{fa}. The proof consists in showing that Algorithm {\tt Connected Contractible Gathering} correctly performs gathering of any connected contractible configuration. In view of the formulation of procedures
{\tt Leaf Destruction} and {\tt 4-cycle Destruction}, this algorithm can be executed by copies of a deterministic finite automaton without initial input.

{\bf Proof of Theorem \ref{fa}}

In order to prove the correctness of Algorithm {\tt Connected Contractible Gathering}, consider any initial connected contractible configuration $C$.
We first prove that all agents eventually get to a single cell.
If the configuration has neither leaves nor 4-cycles containing NE-corners or NW-corners, then this is the case for the initial configuration $C$, by Lemma \ref{geo}.
Hence suppose that $C$ has either leaves or such corners. In every move an agent goes from a leaf to the neighbor cell, or from a corner as above to its South neighbor cell.
After a finite number of moves some cell occupied by agents becomes non-occupied. By the formulation of the algorithm, the opposite change never happens.
Moreover, whenever the number of occupied cells decreases, the configuration remains connected contractible. Indeed, destroying a leaf or a NE-corner or a NW-corner in a 4-cycle cannot disconnect a configuration.
To see that it cannot create a hole, notice that, if a configuration $C'$ results from configuration $C''$ by deleting a leaf or such a corner $x$, and if there were a hole $H$ in $C'$, then either $H$
or $H\setminus \{x\}$ must have been a hole in $C''$. It follows that whenever Algorithm {\tt Connected Contractible Gathering}  creates a new configuration, then it is connected contractible.
Eventually a connected contractible configuration is created that has neither leaves nor 4-cycles containing NE-corners or NW-corners. By Lemma \ref{geo}, all agents in this configuration are in a single cell.
The first look of any agent after this time confirms that there are no neighboring agents, and hence the agent transits to state $\omega$. This proves that, starting from any initial connected contractible configuration, all agents eventually get to a single cell and transit to state $\omega$.
Since the configuration never gets disconnected during the algorithm execution, agents can never transit to state $\omega$ when not all agents are in a single cell.  This proves the correctness of Algorithm {\tt Connected Contractible Gathering}.
\qed

\section{Conclusion}

We showed that ``three qualities are better than two'' for asynchronous gathering of agents in the planar grid: while initial input, unbounded memory and randomization put together can produce a machine  that can be used to gather all possible initial configurations with probability 1, no machine that has some two of these qualities but not the third, can be used to achieve this goal.
It would be interesting to give an exact characterization of classes of configurations that can be gathered using copies of machines enjoying every two of these qualities but not the third
(i.e., resp. deterministic Turing Machines with initial input, random finite automata with initial input, and random Turing machines without initial input), and to give
an exact characterization of classes of configurations that can be gathered using copies of machines that have a single of these qualities (i.e., resp. deterministic finite automata with initial input, 
deterministic Turing Machines without initial input, and random finite automata without initial input).  

For deterministic machines without initial input we showed that Turing Machines can gather all initial connected configurations and that finite automata can gather all initial connected contractible configurations.
This leaves an interesting open question: Can we use our weakest machines, i.e., deterministic finite automata without initial input, to gather all connected configurations. Our conjecture is: no. Regardless of the status of this conjecture,
what class of configurations can be gathered when agents are copies of a deterministic finite automaton without initial input?

\section{Acknowledgements}
This research was supported in part by NSERC discovery grant 8136 -- 2013 
and by the Research Chair in Distributed Computing of
the Universit\'{e} du Qu\'{e}bec en Outaouais.

%%%%%%%%%%%%%%%%%%%%%%%%%%%%%%%%%%%%%%%%%%%%%%%%%%%%%%%%%%%
\bibliographystyle{plain}

%%%%%%%%%%%%%%%%%%%%%%%%%%%%%%%%%%%%%%%%%%%%%%%%%%%%%%%%%%% 

\end{document}